\newcommand{\diag}{\mathop{\rm diag}}
\newcommand{\blkdiag}{\mathop{\rm blkdiag}}
\newcommand{\rank}{\mathop{\rm rank}}
\newcommand{\argmin}{\mathop{\rm argmin}}
\newcommand{\norm}[1]{\left\lVert#1\right\rVert}
\newcommand{\mnorm}[1]{{\left\vert\kern-0.25ex\left\vert\kern-0.25ex\left\vert #1 
    \right\vert\kern-0.25ex\right\vert\kern-0.25ex\right\vert}}
\newtheorem{definition}{Definition} 
\newtheorem{theorem}{Theorem}
\newtheorem{lemma}{Lemma}
\newtheorem{assumption}{Assumption}
\newcommand{\eg}{{\it e.g.}}
\definecolor{dgreen}{rgb}{0, 0.42, 0.24}
\definecolor{dblue}{rgb}{0, 0.28, 0.67}
\pgfplotsset{compat=1.9} 
\title{\LARGE \bf Cost Design in Atomic Routing Games
}
\author{Yue~Yu, Shenghui~Chen, David~Fridovich-Keil, and Ufuk~Topcu
\thanks{This research is supported in part by NSF 2211548, NSF 1652113, NASA 80NSSC21M0071, and AFRL FA9550-19-1-0169. Y. Yu, S. Chen, and U. Topcu are with the Oden Institute for Computational Engineering and Sciences, The University of Texas at Austin, TX, 78712, USA (emails:  yueyu@utexas.edu,\, shenghui.chen@utexas.edu,\, utopcu@utexas.edu). D. Fridovich-Keil is with the Department of Aerospace Engineering and Engineering Mechanics, The University of Texas at Austin, TX, 78712, USA (email: dfk@utexas.edu).}%
}
\begin{document}

\maketitle
\thispagestyle{empty}
\pagestyle{empty}

\begin{abstract}
An atomic routing game is a multiplayer game on a directed graph. Each player in the game chooses a path---a sequence of links that connect its origin node to its destination node---with the lowest cost, where the cost of each link is a function of all players' choices. We develop a novel numerical method to design the link cost function in atomic routing games such that the players' choices at the Nash equilibrium minimize a given smooth performance function. This method first approximates the nonsmooth Nash equilibrium conditions with smooth ones, then iteratively improves the link cost function via implicit differentiation. We demonstrate the application of this method to atomic routing games that model noncooperative agents navigating in grid worlds. 
\end{abstract}


\section{Introduction}
\label{sec: introduction}

A routing game is a multiplayer game on a directed graph that contains a collection of nodes and links. Each player in the game chooses a sequence of links, which together form a \emph{path}, that connect its origin node to its destination node---with the lowest cost, where the cost of each link is a function of all players' choices of paths. The Nash equilibrium of this game is a collective path choice where no player can obtain a lower cost by unilaterally switching to an alternative path. Routing games are the fundamental mathematical models for predicting the collective behavior of selfish players in communication and transportation networks \cite{beckmann1956studies,correa2010wardrop,gartner1980optimal1,gartner1980optimal2,roughgarden2007routing,patriksson2015traffic}.     

Cost design---also known as network design---is the problem of designing the link cost function of a routing game so that the Nash equilibrium satisfies certain desired properties, \eg,  matching a desired equilibrium pattern or minimizing the total cost of all players. Cost design is the key to modifying unwanted traffic patterns in congested transportation networks \cite{migdalas1995bilevel,yang1998models,farahani2013review,bertsimas2015data,paccagnan2021optimal,paccagnan2022utility}.

Existing results on cost design are limited to \emph{nonatomic} routing games, where the number of players is assumed to be infinite and each player is a negligible fraction of the entire player population \cite{roughgarden2007routing}. Although reasonable for applications with a large number of players---\eg, predicting the traffic patterns of thousands of vehicles \cite{patriksson2015traffic}---such an assumption is not valid for applications in multi-agent systems with a medium number of agents, where each player is no longer negligible compared with the entire player population.  
 
We develop a numerical method for cost design in \emph{atomic} routing games, where the number of players is finite and each player searches for the shortest path in a directed connected graph. We first show that the Nash equilibrium conditions in atomic routing games are equivalent to a set of nonsmooth piecewise linear equations. Next, we develop an approximate projected gradient method to design the link cost function such that the Nash equilibrium of the game minimizes a given smooth performance function. In each iteration, this method first approximates the nonsmooth Nash equilibrium conditions with a set of smooth nonlinear equations, then updates the link cost function via an approximate projected gradient method based on implicit differentiation. We demonstrate the application of this method to atomic routing games that model noncooperative agents navigating in grid worlds. 

Our results extend a previous cost design method for matrix games \cite{yu2022inverse} to atomic routing games while remaining scalable. In particular, the results in \cite{yu2022inverse} require enumerating all the options of each player. In an atomic routing game, the number of these options equals the number of all possible paths for all players, which scales exponentially with the number of nodes and links of the underlying graph. In contrast, the proposed method does not require such enumeration, and the number of equations needed scales linearly with the number of nodes and links of the underlying graph. 

\paragraph*{Notation} We let \(\mathbb{R}\), \(\mathbb{R}_{\ge 0}\), \(\mathbb{R}_{>0}\), and \(\mathbb{N}\) denote the set of real, nonnegative real, positive real, and natural numbers, respectively. Given \(m, n\in\mathbb{N}\), we let \(\mathbb{R}^n\) and \(\mathbb{R}^{m\times n}\) denote the set of \(n\)-dimensional real vectors and \(m\times n\) real matrices; we let \(\mathbf{1}_n\) and \(I_n\) denote the \(n\)-dimensional vector of all 1's and the \(n\times n\) identity matrix, respectively.  Given positive integer \(n\in\mathbb{N}\), we let \([n]\coloneqq \{1, 2, \ldots, n\}\) denote the set of positive integers less or equal to \(n\). Given \(x\in\mathbb{R}^n\) and \(k\in[n]\), we let \([x]_k\) denote the \(k\)-th element of vector \(x\), and \(\norm{x}\) denote the \(\ell_2\)-norm of \(x\).  Given a square real matrix \(A\in\mathbb{R}^{n\times n}\), we let \(A^\top\), \(A^{-1}\), and \(A^{-\top}\) denote the transpose, the inverse, and the transpose of the inverse of matrix \(A\), respectively; we say \(A\succeq 0\) and \(A\succ 0\) if \(A\) is symmetric positive semidefinite and symmetric positive definite, respectively; we let \(\norm{A}_F\) denote the Frobenius norm of matrix \(A\).  Given continuously differentiable functions \(f:\mathbb{R}^n\to\mathbb{R}\) and \(G:\mathbb{R}^n\to\mathbb{R}^m\), we let \(\nabla_x f(x)\in\mathbb{R}^n\) denote the gradient of \(f\) evaluated at \(x\in\mathbb{R}^n\); the \(k\)-th element of \(\nabla_x f(x)\) is \(\frac{\partial f(x)}{\partial [x]_k}\). Furthermore,  we let \(\partial_x G(x)\in\mathbb{R}^{m\times n}\) denote the Jacobian of function \(G\) evaluated at \(x\in\mathbb{R}^n\); the \(ij\)-th element of matrix \(\partial_x G(x)\) is \(\frac{\partial [G(x)]_i}{\partial [x]_j}\). 
\section{Atomic routing games and its approximation via entropy regularization}
\label{sec: routing}

We first introduce the mathematical model for atomic routing games on a directed graph, followed by a smooth approximation of the Nash equilibria in atomic routing games. These results provide the foundation of the cost design results in the next section.

\subsection{Directed graphs}\label{subsec: graph}
A directed graph \(\mathcal{G}\) contains a set of nodes \( [n]\), a set of directed links \( [m]\). Each link
is an ordered pair of distinct nodes, where the first and second node are the “tail” and “head” of the link, respectively. We characterize the connection among different nodes in graph \(\mathcal{G}\) via the \emph{incidence matrix}, denoted by \(E\in\mathbb{R}^{n\times m}\). The entry \([E]_{ij}\) in matrix \(E\) is associated with node \(i\) and link \(j\) as follows:
\begin{equation}\label{eqn: incidence}
     [E]_{ij}=\begin{cases}
    1, & \text{if node \(i\) is the tail of link \(j\),}\\
    -1, & \text{if node \(i\) is the head of link \(j\),}\\
    0, & \text{otherwise.}
    \end{cases}
\end{equation}

\subsection{Atomic routing games}
\label{subsec: atomic routing}
Given the incidence matrix \(E\in\mathbb{R}^{n\times m}\) of a directed graph \(\mathcal{G}\), we consider a game with \(p\in\mathbb{N}\) players. Player \(i\in[p]\) has an origin node \(o_i\in[n]\) and a destination node \(d_i\in[n]\) in graph \(\mathcal{G}\). We define the key components of this game as follows.
\subsubsection{Players' path choices}
Each player \(i\in[p]\) chooses a path with the lowest cost that connects its origin node \(o_i\) and destination node \(d_i\). We represent the path chosen by player \(i\) via a \emph{flow vector} \(x_i\in\mathbb{R}^m\), where \([x_i]_k\in[0, 1]\) denotes the probability for player \(i\) to use link \(k\) on the path it chooses. 

If vector \(x_i\) denotes a path that connects node \(o_i\) and \(d_i\), then it belongs to certain \emph{feasible flow set}, which we define as follows. Let \(r_i\in\mathbb{R}^n\) denote the \emph{origin-destination vector} of player \(i\) such that \([r_i]_j=1\) if \(j=o_i\); \([r_i]_j=-1\) if \(j=d_i\); and \([r_i]_j=0\) if \(j\neq o_i\) and \(j\neq d_i\). Furthermore, let \(s_i\in\mathbb{R}^{n-1}\) denote the \emph{reduced origin-destination vector} as
\begin{equation}\label{eqn: origin}
    s_i=\Gamma (r_i, d_i),
\end{equation}
where \(\Gamma (r_i, d_i)\in\mathbb{R}^{n-1}\) is vector obtained from vector \(r_i\) by deleting the \(d_i\)-th element in \(r_i\). We also use the notion of \emph{reduced incidence matrix} for player \(i\), defined as follows:
\begin{equation}\label{eqn: reduced incidence}
    E_i\coloneqq \Gamma(E, d_i),
\end{equation}
where \(\Gamma(E, d_i)\in\mathbb{R}^{(n-1)\times m}\) is the matrix obtained from matrix \(E\) by deleting its \(d_i\)-th row. 

Equipped with the above notations, we define the \emph{feasible flow set} for player \(i\), denoted by \(\mathbb{P}_i\subset\mathbb{R}^m\), as 
\begin{equation}\label{eqn: flow polytope}
    \mathbb{P}_i=\{y\in\mathbb{R}^m| E_iy=s_i, y\geq 0\}
\end{equation}
for all \(i\in[p]\). Notice that if \(x_i\in\mathbb{P}_i\) and we let \(e_i^\top\) be the \(d_i\)-th row of matrix \(E\), then \eqref{eqn: incidence} and \eqref{eqn: reduced incidence} together imply that
\begin{equation}\label{eqn: sink flow}
    \begin{aligned}
        e_i^\top x_i = \mathbf{1}_{n}^\top Ex_i-\mathbf{1}_{n-1}^\top E_ix_i=0-\mathbf{1}_{n-1}^\top s_i=-1,
    \end{aligned}
\end{equation}
where the second step is due to the fact that \(\mathbf{1}_{n}^\top E=0_m\). Therefore, \(x_i\in\mathbb{P}_i\) is and only if \(Ex_i=r_i\). In other words, \(x_i\in\mathbb{P}_i\) if and only if \(x_i\) denotes a unit flow that originates at node \(o_i\), disappear at node \(d_i\), and is preserved at any other node.

\subsubsection{The Nash equilibrium conditions}
We assume the cost of each link is a quadratic function of all players' flow vectors, and each player chooses a path that minimizes this quadratic function. In other words, each player \(i\) chooses its flow vector \(x_i\) as follows:
\begin{equation}\label{opt: routing}
       x_i\in\underset{y\in\mathbb{P}_i}{\argmin} \enskip \textstyle \big(b_i+\frac{1}{2}C_{ii} y+\sum_{j\neq i} C_{ij}x_j\big)^\top y,
\end{equation}
where vector \(b_i\in\mathbb{R}^{m}\) defines the \emph{nominal link cost}, which is independent of the players' strategies; matrices \(C_{ij}\in\mathbb{R}^{m\times m}\) for all \(i, j\in[p]\) are matrices such that vector \(C_{ij}x_j\in\mathbb{R}^m\) denotes the link cost for player \(i\) due to its interaction with player \(j\). Here we assume that the link cost is a linear function---rather than general polynomials---of the players' flow vectors to simplify our further computation.

We introduce the notion of Nash equilibrium in an atomic routing game as follows

\begin{definition}\label{def: Nash}
A joint flow \(x\coloneqq \begin{bmatrix}
x_1^\top & \ldots & x_p^\top
\end{bmatrix}^\top\) is a \emph{Nash equilibrium} if \eqref{opt: routing} holds for all \(i\in[p]\).
\end{definition}

\subsection{Computing Nash equilibria via nonlinear programming}

We now discuss how to compute the Nash equilibrium in Definition~\ref{def: Nash}. To this end, we denote the \emph{joint flow} of all players as 
\begin{equation}
    x\coloneqq \begin{bmatrix}
    x_1^\top & x_2^\top & \cdots & x_p^\top
    \end{bmatrix}^\top.
\end{equation}
We also use the following notation:
\begin{equation}\label{eqn: game parameter}
\begin{aligned}
& C \coloneqq  \begin{bsmallmatrix}
    C_{11} & C_{12}  & \hdots & C_{1p}\\
    C_{21} & C_{22}  & \hdots & C_{2p}\\
    \vdots & \vdots & \ddots & \vdots \\
    C_{p1} & C_{p2} & \hdots  & C_{pp}
    \end{bsmallmatrix},\enskip b \coloneqq \begin{bsmallmatrix}
b_1\\
b_2\\
\vdots\\
b_p
\end{bsmallmatrix},\enskip s \coloneqq \begin{bsmallmatrix}
s_1\\ 
s_2\\
\vdots \\
s_p
\end{bsmallmatrix},\\
&E_{[1, p]}\coloneqq \blkdiag(E_1, E_2, \ldots, E_p),
\end{aligned}
\end{equation}
where \(E_{[1, p]}=\blkdiag(E_1, E_2, \ldots, E_n)\) is the block diagonal matrix obtained by aligning matrices \(E_1, E_2, \ldots, E_p\) along the diagonal of matrix \(E_{[1, p]}\).

The following lemma shows how to compute the Nash equilibrium in Definition~\ref{def: Nash} by solving a set of piecewise linear equations.

\begin{lemma}\label{lem: Nash}
Suppose that set \(\mathbb{P}_i\) is nonempty and \(C_{ii}=C_{ii}^\top\succeq 0\) for all \(i\in [p]\). Then \eqref{opt: routing} holds for all \(i\in[p]\) if and only if one of the two following set of conditions holds.
\begin{enumerate}
    \item There exists \(v\in\mathbb{R}^{p(n-1)}\) and \(u\in\mathbb{R}^{pm}\) such that
\begin{subequations}\label{eqn: kkt}
\begin{align}
    s&=E_{[1, p]}x,\\
   u&=b+Cx-E_{[1, p]}^\top v,\enskip u^\top x=0,\enskip x,u\geq 0_{pm}.\label{eqn: complementarity}
\end{align}
\end{subequations}
    \item There exists \(v\in\mathbb{R}^{p(n-1)}\) such that
\begin{subequations}\label{eqn: max eq}
\begin{align}
    s&=E_{[1, p]}x,\\
    0_{pm}&=\min\{x, b+Cx-E_{[1, p]}^\top v\}.\label{eqn: x min}
\end{align}
\end{subequations}
\end{enumerate}
\end{lemma}
\begin{proof}
We start with the conditions in \eqref{eqn: kkt}. Since \(C_{ii}=C_{ii}^\top\succeq 0\), and set \(\mathbb{P}_i\) is nonempty and described by linear constraints only, \eqref{opt: routing} holds if and only if its KKT conditions in \eqref{eqn: kkt} hold \cite[Thm. 27.8]{rockafellar1970convex}.

Next, we prove that the conditions in \eqref{eqn: kkt} are equivalent to those in \eqref{eqn: max eq}. It suffices to show that \eqref{eqn: complementarity} holds for some \(u\in\mathbb{R}^{pm}\) if and only if \eqref{eqn: x min} holds. Notice that \eqref{eqn: complementarity} holds for some \(u\in\mathbb{R}^{pm}\) if and only if
\begin{equation}\label{eqn: k comp}
    [b+Cx-E^\top_{[1, p]}v]_k, [x]_k\geq 0, \, [b+Cx-E^\top_{[1, p]}v]_k[x]_k=0
\end{equation}
for all \(k\in[pm]\). Since the condition in \eqref{eqn: k comp} is equivalent to \(\min\{[x]_k, [b+Cx-E^\top_{[1, p]}v]_k\}=0\), we complete our proof. 
\end{proof}

As a result of Lemma~\ref{lem: Nash}, we can compute the Nash equilibrium in Definition~\ref{def: Nash} by solving either the piecewise linear equations in \eqref{eqn: max eq}, or the following optimization problem with a bilinear objective function
\begin{equation}\label{opt: Nash nlp}
    \begin{array}{ll}
        \underset{x, u, v}{\mbox{minimize}} & u^\top x \\
        \mbox{subject to} & s=E_{[1, p]}x,\\
         & u = b+Cx-E_{[1, p]}^\top v,\enskip  x, u\geq 0_{pm}.
    \end{array}
\end{equation}
In particular, one can verify that the conditions in \eqref{eqn: kkt} hold for some \(x, u\in\mathbb{R}^{pm}\) and \(v\in\mathbb{R}^{p(n-1)}\)---or equivalently, \eqref{eqn: max eq} hold for some \(x\in\mathbb{R}^{pm}\) and \(v\in\mathbb{R}^{p(n-1)}\)---if and only if the optimal value in optimization~\eqref{opt: Nash nlp} equals zero. 

\subsection{Approximate Nash equilibria via entropy-regularization}

We now introduce an approximation of the Nash equilibrium in Definition~\ref{def: Nash}. The idea is to approximate the nonsmooth piecewise linear equations in \eqref{eqn: max eq}--which are difficult to solve in general--with smooth nonlinear ones. To this end, we first introduce the following approximation of optimization~\eqref{opt: routing}:
\begin{equation}\label{opt: entropy}
       x_i\in\underset{y\in\mathbb{P}_i}{\argmin} \enskip \textstyle \big(b_i+\frac{1}{2}C_{ii} y+\sum_{j\neq i} C_{ij}x_j\big)^\top y+\lambda y^\top \ln(y)
\end{equation}
where \(\lambda\in\mathbb{R}_{\ge 0}\) is a nonnegative weight, and \(\ln(y)\) denotes the elementwise natural logarithm of \(y\). 

The optimization in \eqref{opt: entropy} approximates the one in \eqref{opt: routing} by adding an entropy regularization term in the objective function. Similar regularization is common in matrix games \cite{yu2022inverse}. The resulting equilibrium is also known as the \emph{quantal response equilibrium} \cite{mckelvey1998quantal}. 

The following results give a nonlinear-equations-based characterization of the condition in \eqref{opt: entropy}.
\begin{lemma}\label{thm: softmax}
Suppose that set \(\mathbb{P}_i\cap \mathbb{R}_{>0}^m\) is nonempty, \(C_{ii}=C_{ii}^\top\succeq 0\), and \(\lambda>0\). Then \eqref{opt: entropy} holds for all \(i\in[p]\) if and only if there exists \(v\in\mathbb{R}^{p(n-1)}\)  such that
\begin{equation}\label{eqn: softmax}
\begin{aligned}
    s=&E_{[1, p]}x,\\
    x=& \exp(\textstyle \frac{1}{\lambda}(E_{[1, p]}^\top v-b-Cx)-\mathbf{1}_{pm}).
\end{aligned}
\end{equation}
\end{lemma}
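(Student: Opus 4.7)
The plan is to observe that each player's problem \eqref{opt: entropy}, with the other players' flows held fixed, is a strictly convex program over the polyhedron $\mathbb{P}_i$ whose unique minimizer lies in the strict interior of the nonnegative orthant; the system \eqref{eqn: softmax} is then exactly the KKT stationarity condition once the nonnegativity multiplier has been eliminated. The two hypotheses enter cleanly: $\lambda > 0$ together with $C_{ii} = C_{ii}^\top \succeq 0$ gives strict convexity in $y$ (the entropy term $\lambda y^\top \ln(y)$ is strictly convex on $\mathbb{R}_{++}^m$, while the quadratic part is convex), and $\mathbb{P}_i \cap \mathbb{R}_{++}^m \neq \emptyset$ pairs with the $-\infty$ gradient of $y \ln y$ at the boundary to force interiority.

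For the forward direction, I would first establish that the minimizer $x_i$ satisfies $[x_i]_k > 0$ for every $k$. If some coordinate were zero at the optimum, picking any $\tilde y \in \mathbb{P}_i \cap \mathbb{R}_{++}^m$ and moving from $x_i$ toward $\tilde y$ along the feasible direction $d = \tilde y - x_i$ would yield a one-sided directional derivative whose $\lambda y^\top \ln(y)$ contribution diverges to $-\infty$ (since $\ln t \to -\infty$ as $t \to 0^+$) while all the other smooth terms stay bounded, contradicting optimality. With interiority in hand the nonnegativity constraint is inactive, so KKT for \eqref{opt: entropy} reduces to the stationarity of the Lagrangian $\mathcal{L}_i(y, v_i) = (b_i + \frac{1}{2} C_{ii} y + \sum_{j \neq i} C_{ij} x_j)^\top y + \lambda y^\top \ln(y) - v_i^\top (E_i y - s_i)$ at $y = x_i$, which gives
\[
  b_i + C_{ii} x_i + \sum_{j \neq i} C_{ij} x_j + \lambda \ln(x_i) + \lambda \mathbf{1}_m - E_i^\top v_i = 0.
\]
Solving for $\ln(x_i)$ and exponentiating produces the per-player version of the second line of \eqref{eqn: softmax}; stacking across $i \in [p]$ and combining with the primal feasibility $E_i x_i = s_i$ recovers the full system.

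For the converse, suppose $(x, v)$ satisfies \eqref{eqn: softmax}. The exponential formula forces $x > 0$ coordinatewise, and $s = E_{[1,p]} x$ then gives $x_i \in \mathbb{P}_i \cap \mathbb{R}_{++}^m$. Reversing the algebraic manipulation recovers the stationarity equation above at $y = x_i$ with multiplier $v_i$ and with the (inactive) nonnegativity multiplier equal to zero, so $x_i$ is a KKT point of the strictly convex program \eqref{opt: entropy}; strict convexity plus affine constraints upgrade this to the global optimality required by \eqref{opt: entropy}.

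The step I expect to be the main obstacle is the interiority argument. It is well known that entropic regularization prevents the minimizer from touching the boundary, but the formal justification depends essentially on the hypothesis $\mathbb{P}_i \cap \mathbb{R}_{++}^m \neq \emptyset$ to supply a feasible inward direction and on a careful one-sided directional-derivative calculation around the non-differentiability of $y \ln y$ at zero; everything after interiority is a routine manipulation of the first-order conditions.
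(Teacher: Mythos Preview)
Your proposal is correct and follows essentially the same approach as the paper: both arguments reduce \eqref{opt: entropy} to its KKT conditions, observe that the logarithmic term forces the minimizer into the strict interior so the nonnegativity constraints are inactive, and then solve the stationarity equation for $x_i$ via the exponential. The paper's proof is terser---it simply states that the logarithm implies $x_i\in\mathbb{R}_{++}^m$ and invokes \cite[Thm.~27.8]{rockafellar1970convex}---whereas you spell out the interiority step via a directional-derivative argument, but the underlying route is the same.
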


\begin{proof}
Since \(C_{ii}\succeq 0\) for all \(i\in[p]\), the quadratic objective function in \eqref{opt: entropy} is a convex function of \(y\). The KKT conditions of the optimization in \eqref{opt: entropy} are given by
\begin{equation}\label{eqn: entropy kkt}
    \begin{aligned}
         &s_i-E_ix_i=0_n,\enskip x_i\geq 0,\\
         & \textstyle b_i+\sum_{j=1}^n C_{ij}x_j+\lambda\ln(x_i)+\lambda\mathbf{1}_m-E_i^\top v_i=0_m,
    \end{aligned}
\end{equation}
for all \(i\in[p]\). Notice that the nonnegativity constraints in set \(\mathbb{P}_i\) are redundant since the logarithm function  implies that \(x_i\in\mathbb{R}_{>0}^m\). Since \(\mathbb{P}_i\cap\mathbb{R}_{>0}^m\) is nonempty, all the linear constraints in optimization~\eqref{opt: entropy} can be satisfied, and we know that \eqref{opt: entropy} holds if and only if \eqref{eqn: entropy kkt} holds for some \(v_i\in\mathbb{R}^n\) \cite[Thm. 27.8]{rockafellar1970convex}. The rest of the proof is based on the relation between logarithm and exponential function and the assumption that \(\lambda>0\).
\end{proof}

In the context of the routing game, Lemma~\ref{thm: softmax} shows the effects of an additional tax in each player's objective function---which corresponds to the entropy term in \eqref{opt: entropy}---on the resulting Nash equilibrium: instead of the nonsmooth equilibrium conditions in \eqref{eqn: max eq}, we obtain the smooth equilibrium conditions in \eqref{eqn: softmax}. 

\subsection{Computing approximate Nash equilibria via nonlinear least-squares}
Lemma~\ref{thm: softmax} shows that solving a smooth approximation of the Nash equilibrium conditions in \eqref{eqn: softmax} is equivalent to solving a set of smooth nonlinear equations, or equivalently, the following nonlinear least-squares problem:
\begin{equation}\label{opt: nonlinear ls}
\begin{array}{ll}
    \underset{x, v}{\mbox{minimize}} & \norm{x-\exp(\frac{1}{\lambda}(E_{[1, p]}^\top v-b-Cx)-\mathbf{1}_{pm})}^2\\
    &+\norm{E_{[1, p]}x-s}^2.
\end{array}
\end{equation}

However, the question remains whether such a solution exists, and if so, whether it is unique or not. To answer these questions, we first make the following assumption (recall the definition of the incidence matrix \(E\) from Section~\ref{subsec: graph}). 

\begin{assumption}\label{asp: main} Set \(\mathbb{P}_i\cap \mathbb{R}_{>0}^m\) is nonempty,
\(\lambda>0\), \(C+C\succeq 0\), and \(C_{ii}=C_{ii}^\top\) for all \(i\in[n]\), and \(\rank E=n-1\).
\end{assumption}

Notice that \(C+C^\top\succeq 0\) and \(C_{ii}=C_{ii}^\top\) together imply that \(C_{ii}\succeq 0\) for all \(i\in[p]\). We note that \(\rank E=n-1\) if \(E\) corresponds to a directed graph obtained by assigning arbitrary directions to the links of a connected undirected graph \cite[Thm. 8.3.1]{godsil2001algebraic}.  

The following theorem provides sufficient conditions on matrix \(C\) that ensure that the solution for optimization~\eqref{opt: nonlinear ls} exists and is unique. The proof is based on the notion of \emph{diagonally strictly concavity} in games \cite{rosen1965existence}.

\begin{theorem}\label{lem: unique}
Suppose that Assumption~\ref{asp: main} holds. There exists unique \(x\in\mathbb{R}^{pm}\) and \(v\in\mathbb{R}^{p(n-1)}\) such that \eqref{eqn: softmax} holds.
\end{theorem}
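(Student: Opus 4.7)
The plan is to reduce \eqref{eqn: softmax} to a Nash equilibrium existence and uniqueness question via Lemma~\ref{thm: softmax}, and then invoke Rosen's theorem on diagonally strictly convex games \cite{rosen1965existence}. By Lemma~\ref{thm: softmax}, a pair $(x,v)$ solves \eqref{eqn: softmax} if and only if \eqref{opt: entropy} holds for all $i\in[p]$, so it suffices to show that the entropy-regularized game has a unique joint minimizer $x$ and that this $x$ determines $v$ uniquely.

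For uniqueness of $x$, I would introduce the pseudo-gradient
\[
F(x)\coloneqq b+Cx+\lambda\bigl(\ln(x)+\mathbf{1}_{pm}\bigr),
\]
whose $i$-th block is $\nabla_{x_i}$ of player $i$'s cost in \eqref{opt: entropy} (using $C_{ii}^\top=C_{ii}$). For any $x,y\in\mathbb{R}_{++}^{pm}$ with $x\neq y$, a direct computation gives
\[
(F(x)-F(y))^\top(x-y)=\tfrac{1}{2}(x-y)^\top(C+C^\top)(x-y)+\lambda\sum_{k=1}^{pm}\bigl(\ln[x]_k-\ln[y]_k\bigr)\bigl([x]_k-[y]_k\bigr).
\]
Assumption~\ref{asp: main} renders the first term nonnegative, and the second is strictly positive because $\ln$ is strictly increasing on $\mathbb{R}_{++}$. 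Hence $F$ is strictly monotone on $\mathbb{R}_{++}^{pm}$, which is precisely Rosen's diagonal strict convexity condition and precludes two distinct Nash equilibria.

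For existence, each player's regularized cost is strictly convex in $x_i$ and coercive on $\mathbb{P}_i$: the entropy term $\lambda y^\top\ln(y)$ grows super-linearly and the quadratic part is PSD, so a unique best response exists and lies in $\mathbb{P}_i\cap\mathbb{R}_{++}^m$ (the logarithmic barrier drives the gradient to $-\infty$ at the boundary, ruling out boundary minimizers). A standard variational-inequality argument combining this coercivity with the strict monotonicity of $F$ then delivers an equilibrium. Once $x$ is fixed, the KKT identity \eqref{eqn: entropy kkt} becomes $E_{[1,p]}^\top v=b+Cx+\lambda(\ln(x)+\mathbf{1}_{pm})$. Since $\mathbf{1}_n^\top E=0$ by \eqref{eqn: incidence} and $\rank E=n-1$, deleting any single row of $E$ preserves the rank; thus $\rank E_i=n-1$, every $E_i^\top$ has full column rank, and $v$ is uniquely determined.

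The main obstacle I expect is existence when $\mathbb{P}_i$ is unbounded. Rosen's original statement assumes compact convex strategy sets, but when the graph contains cycles, $\mathbb{P}_i$ admits arbitrary cycle flows and need not be bounded. Extracting compactness from the entropy regularizer---for instance, by uniformly bounding the sublevel sets of each player's cost and restricting to a large enough truncation of $\mathbb{P}_i$ before passing to the limit---is the key step that lets the fixed-point or variational-inequality machinery apply.
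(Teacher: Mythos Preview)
Your approach mirrors the paper's: both reduce to Rosen's diagonally strictly concave framework for existence and uniqueness of \(x\) (the paper phrases this via positive definiteness of \(C+C^\top+\lambda\,\mathrm{diag}(x)^{-1}\), you via the equivalent direct monotonicity computation on the pseudo-gradient), and both obtain uniqueness of \(v\) from \(\rank E=n-1\) and the block-diagonal structure of \(E_{[1,p]}\). Your treatment is in fact more careful than the paper's on one point---the paper simply cites Rosen's Theorems~1 and~6 without addressing the possible unboundedness of \(\mathbb{P}_i\) when the graph contains directed cycles, whereas you correctly flag this gap and sketch the coercivity/truncation argument needed to close it.
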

\begin{proof}
We first prove that there exists a unique \(x\) that satisfies \eqref{eqn: softmax} for some \(v\in\mathbb{R}^{p(n-1)}\). Since \eqref{opt: entropy} implies that \(x_i\) is elementwise strictly positive (due to the logarithm function), Assumption~\ref{asp: main} implies that matrix \(C+C^\top+\diag(x)^{-1}\) is positive definite. Hence one can show that any \(x\) that satisfies \eqref{opt: entropy} for all \(i\in[p]\) is the unique Nash equilibrium of a \(n\)-player diagonally strict concave game, whose existence and uniqueness follows from \cite[Thm. 1]{rosen1965existence} and \cite[Thm. 6]{rosen1965existence}, respectively. Finally, under Assumption~\ref{asp: main}, Lemma~\ref{thm: softmax} states that \(x\) satisfies \eqref{opt: entropy} for all \(i\in[p]\) if and only if there exists \(v\in\mathbb{R}^{p(n-1)}\) such that \eqref{eqn: softmax} holds.

Next, we prove the uniqueness of \(v\) by contradiction. Let \(x\) is the unique vector that that satisfies \eqref{eqn: softmax} for some \(v\in\mathbb{R}^{p(n-1)}\). Suppose that there exists \(v_a\neq  v_b\) such that
\begin{equation*}
\begin{aligned}
     x&=\textstyle \exp(\frac{1}{\lambda}(E_{[1, p]}^\top v_a-b-Cx)-\mathbf{1}_{pm})\\
    &=\textstyle\exp(\frac{1}{\lambda}(E_{[1, p]}^\top v_b-b-Cx)-\mathbf{1}_{pm}),
\end{aligned}
\end{equation*}
Then one can verify that
\begin{equation}\label{eqn: E nullspace}
    E_{[1, p]}^\top (v_a-v_b)=0_{pm}.
\end{equation}
By using the definition of matrix \(E_{[1, p]}\) in \eqref{eqn: game parameter}, we can show that if \eqref{eqn: E nullspace} holds for some \(v_a\neq v_b\), then  there exists \(i\in[n]\) such that  \(E_i^\top z_i=0\) for some \(z_i\in\mathbb{R}^{n-1}, z_i\neq 0_{n-1}\). Hence
\begin{equation}\label{eqn: contradiction}
    \begin{bmatrix}
    z_i^\top & 0
    \end{bmatrix} \begin{bmatrix}
    E_i\\
    e_i
    \end{bmatrix}=0_{m}^\top=\mathbf{1}_n^\top E,
\end{equation}
where \(e_i\) is the \(d_i\)-th row of matrix \(E\), and the second step is due to the definition of matrix \(E\) in \eqref{eqn: incidence}. Since \(z_i\neq 0_{n-1}\), vector \(\mathbf{1}_n^\top\) and \(\begin{bmatrix}
    z_i^\top & 0
    \end{bmatrix}\) are linearly independent.
Based on the definition of \(E_i\) in \eqref{eqn: reduced incidence}, we conclude that there exist two linearly independent vectors in the kernel of matrix \(E^\top\). Therefore, we must have \(\rank E=\rank E^\top \leq n-2\), which contradicts the assumption that \(\rank E=n-1\).
\end{proof}

\section{Cost design via implicit differentiation}
\label{sec: inverse}
We now introduce the cost design problem in atomic routing games. Our task is to design the value of vector \(b\) and matrix \(C\) such that the Nash equilibrium in Definition~\ref{def: Nash} optimizes certain performance. In particular, we consider the following optimization problem for cost design:
\begin{equation}\label{opt: bilevel Nash}
    \begin{array}{ll}
        \underset{x, v, b, C}{\mbox{minimize}} & \psi(x) \\
        \mbox{subject to} & \text{The conditions in \eqref{eqn: max eq} hold, }\\
        &b\in\mathbb{B}, \enskip C\in\mathbb{D}.
    \end{array}
\end{equation}
where \(\psi:\mathbb{R}^{pm}\to\mathbb{R}\) is a continuously differentiable function that evaluates the performance of the Nash equilibrium \(x\), set \(\mathbb{B}\subseteq\mathbb{R}^{pm}\) and \(\mathbb{D}\subseteq \mathbb{R}^{pm\times pm}\) are closed and convex, representing the candidate values for vector \(b\) and matrix \(C\), respectively.

Due to Lemma~\ref{thm: softmax}, one can replace the conditions in \eqref{eqn: max eq} with those in \eqref{eqn: kkt}, and show that optimization~\eqref{opt: bilevel Nash} is a \emph{mathematical program with equilibrium constraints}, a nonconvex nonsmooth optimization problem notoriously difficult to solve \cite{bard2013practical}. To overcome this difficulty, we consider the following approximation to \eqref{opt: bilevel Nash}:
\begin{equation}\label{opt: bilevel entropy}
    \begin{array}{ll}
        \underset{x, v, b, C}{\mbox{minimize}} & \psi(x) \\
        \mbox{subject to} & \text{The conditions in \eqref{eqn: softmax} hold, }\\
        &b\in\mathbb{B}, \enskip C\in\mathbb{D}.
    \end{array}
\end{equation}
The above approximation replaces the nonsmooth conditions in \eqref{eqn: max eq} with the smooth ones in \eqref{eqn: softmax}. As the value of \(\lambda\) decreases, such an approximation becomes more accurate, and a solution for optimization~\eqref{opt: bilevel entropy} becomes a good approximation of a solution for optimization~\eqref{opt: bilevel Nash}.

Next, we discuss how to solve optimization~\eqref{opt: bilevel entropy} using an approximate projected gradient method based on implicitly differentiating the conditions in \eqref{eqn: softmax}.

\subsection{Differentiation through the approximate Nash equilibrium conditions}

We let \(\nabla_b\psi(x)\in\mathbb{R}^{pm}\) and \(\nabla_C\psi(x)\in\mathbb{R}^{pm\times pm}\) be such that
\begin{equation}\label{eqn: b & C grad def}
   \textstyle [\nabla_b\psi(x)]_{i}\coloneqq \frac{\partial \psi(x)}{\partial [b]_i}, \enskip [\nabla_C\psi(x)]_{ij}\coloneqq \frac{\partial \psi(x)}{\partial [C]_{ij}}
\end{equation}
for all \(i,j\in[pm]\). The following theorem provides the numerical formulas to compute the gradient \(\nabla_b\psi(x)\in\mathbb{R}^{pm}\) and \(\nabla_C\psi(x)\in\mathbb{R}^{pm\times pm}\).

\begin{theorem}\label{thm: IFT gradient}
Suppose that Assumption~\ref{asp: main} holds. Let \(x\in\mathbb{R}^{pm}\) and \(v\in\mathbb{R}^{p(n-1)}\) be such that \eqref{eqn: softmax} holds. Let
\begin{subequations}\label{eqn: Jacobian}
\begin{align}
D\coloneqq & \diag(\exp(\textstyle \frac{1}{\lambda}(E_{[1, p]}^\top v-b-Cx)-\mathbf{1}_{pm})),\\
J \coloneqq & \begin{bmatrix}
I_{pm}+\frac{1}{\lambda}DC & \frac{1}{\lambda}DE_{[1, p]}^\top\\
-E_{[1, p]} & 0_{p(n-1)\times p(n-1)}
\end{bmatrix}.
\end{align}
\end{subequations}
If matrix \(J\) is nonsingular, then
\begin{equation}\label{eqn: b & C grad}
\begin{aligned}
    \nabla_b \psi(x) & = \textstyle -\frac{1}{\lambda}\begin{bmatrix} D^\top & 0_{pm\times p(n-1)}\end{bmatrix} J^{-\top}\begin{bmatrix}\nabla_x\psi(x)\\
    0_{p(n-1)}\end{bmatrix},\\
    \nabla_C \psi(x) & = \textstyle -\frac{1}{\lambda}\begin{bmatrix} D^\top & 0_{pm\times p(n-1)}\end{bmatrix} J^{-\top}\begin{bmatrix}\nabla_x\psi(x)\\
    0_{p(n-1)}\end{bmatrix}x^\top.
\end{aligned}
\end{equation}
\end{theorem}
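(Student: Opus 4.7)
The plan is to apply the implicit function theorem to the smooth system \eqref{eqn: softmax}, treating $(x,v)$ as the unknowns and $(b,C)$ as parameters, and then propagate the resulting sensitivities through $\psi$ via the chain rule. Assumption~\ref{asp: main} together with Theorem~\ref{lem: unique} guarantees that $(x,v)$ is uniquely determined by $(b,C)$, so the implicit map $(b,C)\mapsto (x,v)$ is well defined; the nonsingularity of $J$ hypothesized in the statement is precisely the condition that lets the implicit function theorem yield a $C^1$ local inverse.

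First, I would package \eqref{eqn: softmax} as a single residual $F(x,v,b,C)=0$ with blocks $F_1 = x - \exp(\frac{1}{\lambda}(E_{[1,p]}^\top v - b - Cx) - \mathbf{1}_{pm})$ and $F_2 = s - E_{[1,p]}x$. Differentiating componentwise, and recognizing that the elementwise exponential supplies the diagonal factor $D$ defined in \eqref{eqn: Jacobian}, one obtains $\partial_x F_1 = I_{pm} + \frac{1}{\lambda} DC$, $\partial_v F_1 = -\frac{1}{\lambda} D E_{[1,p]}^\top$, $\partial_x F_2 = -E_{[1,p]}$, and $\partial_v F_2 = 0$. Reparameterizing $v \leftarrow -v$---which leaves $x$, $b$, $C$ and hence $\psi$ invariant---flips the sign of the $(1,2)$ block and converts the Jacobian of $F$ in $(x,v)$ into precisely the matrix $J$ of \eqref{eqn: Jacobian}, whose invertibility is assumed in the theorem.

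Next, I would compute the parameter derivatives of $F$. A direct calculation gives $\partial_b F_1 = \frac{1}{\lambda} D$; since $C$ enters $F_1$ only through $Cx$, the identity $\partial_{[C]_{k\ell}}(Cx) = [x]_\ell\, e_k$ (with $e_k$ the $k$-th standard basis vector of $\mathbb{R}^{pm}$) yields $\partial_{[C]_{k\ell}} F_1 = \frac{[x]_\ell}{\lambda} D e_k$; and both parameter partials of $F_2$ vanish. The implicit function theorem then produces $\partial_b x = -\frac{1}{\lambda}\begin{bmatrix}I_{pm} & 0\end{bmatrix} J^{-1}\begin{bmatrix}D\\0\end{bmatrix}$ together with an analogous expression for $\partial_{[C]_{k\ell}} x$. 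Combining this with the chain rule $\nabla_b\psi(x) = (\partial_b x)^\top \nabla_x\psi(x)$ reproduces the first line of \eqref{eqn: b & C grad} directly, while the elementwise identity $[\nabla_C\psi(x)]_{k\ell} = [g]_k\,[x]_\ell$, with $g \coloneqq -\frac{1}{\lambda}\begin{bmatrix}D^\top & 0\end{bmatrix} J^{-\top}\begin{bmatrix}\nabla_x\psi(x)\\0\end{bmatrix}$, collapses into the outer product $g\,x^\top$ appearing in the second line of \eqref{eqn: b & C grad}.

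The main obstacle is the matrix-valued derivative with respect to $C$: one must verify that the collection of elementwise sensitivities re-assembles into a single outer product. This hinges on the observation that $C$ appears in \eqref{eqn: softmax} only through the linear image $Cx$, so the rank-one factorization $\partial_{[C]_{k\ell}}(Cx) = [x]_\ell\, e_k$ cleanly separates the column-index dependence---which propagates linearly through $J^{-1}$ and contributes the $k$-th entry of $g$---from the scalar multiplier $[x]_\ell$, producing exactly the outer-product structure displayed in \eqref{eqn: b & C grad}.
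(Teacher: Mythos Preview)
Your proposal is correct and follows essentially the same route as the paper: package \eqref{eqn: softmax} as a residual $F(\xi,b,C)=0$ with $\xi=(x,v)$, invoke the implicit function theorem, and propagate through $\psi$ via the chain rule (the paper differentiates with respect to $C$ column by column, you do it entrywise, but both collapse to the same outer product $g\,x^\top$). Your reparameterization $v\leftarrow -v$ is a clean way to reconcile the sign of the $(1,2)$ block of $\partial_{(x,v)} F$ with that of $J$ in \eqref{eqn: Jacobian}; the paper simply asserts $\partial_\xi F=J$ without discussing this sign, though the final formulas are unaffected either way since only the $x$-block of the sensitivities enters $\psi$.
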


\begin{proof}
Let \(\xi=\begin{bmatrix} x^\top & v^\top\end{bmatrix}^\top\), \(C_r\) denote the \(r\)-th column of matrix \(C\), and
\begin{equation*}
F(\xi, b, C)\coloneqq\begin{bsmallmatrix}
x- \exp( \frac{1}{\lambda}(E_{[1, p]}^\top v-b-Cx)-\mathbf{1}_{pm})\\
    s-E_{[1, p]}x
\end{bsmallmatrix}.
\end{equation*}
Since \(F(\xi, b, C)\) is a continuously differentiable function, the implicit function theorem \cite[Thm. 1B.1]{dontchev2014implicit} implies the following: if \(\partial_{\xi} F(\xi, b, C)\) is nonsingular, then \(\frac{\partial \xi}{\partial b}=-(\partial_\xi F(\xi, b, C))^{-1}\partial_b F(\xi, b, C)\). Using the chain rule we can show that \(\partial_\xi F(\xi, b, C)=J\) and \(\partial_b F(\xi, b, C)=\frac{1}{\lambda}\begin{bmatrix} D^\top & 0_{pm\times p(n-1)}\end{bmatrix}^\top\). Hence we conclude that \(\frac{\partial \xi}{\partial b} =-\frac{1}{\lambda}J^{-1}\begin{bmatrix} D^\top & 0_{pm\times p(n-1)}\end{bmatrix}^\top\). Similarly, we can show that \(\frac{\partial \xi}{\partial C_r} =-\frac{[x]_r}{\lambda}J^{-1}\begin{bmatrix} D^\top & 0_{pm\times p(n-1)}\end{bmatrix}^\top\). The rest of the proof is due to the chain rule.
\end{proof}

Notice that the matrix inverse \(J^{-1}\) in \eqref{eqn: b & C grad} may not exist, or exists but is numerically challenging to compute. As a remedy, we use the following formulas in practice, where \(J^{-1}\) is approximated by \(J^\dagger\), the Moore-Penrose pseudoinverse of matrix \(J\):
\begin{equation}\label{eqn: b & C approx grad}
\begin{aligned}
    \hat{\nabla}_b \psi(x) & = \textstyle -\frac{1}{\lambda}\begin{bmatrix} D^\top & 0_{pm\times p(n-1)}\end{bmatrix} (J^\dagger)^\top\begin{bmatrix}\nabla_x\psi(x)\\
    0_{p(n-1)}\end{bmatrix},\\
    \hat{\nabla}_C \psi(x) & = \textstyle -\frac{1}{\lambda}\begin{bmatrix} D^\top & 0_{pm\times p(n-1)}\end{bmatrix} (J^\dagger)^\top\begin{bmatrix}\nabla_x\psi(x)\\
    0_{p(n-1)}\end{bmatrix}x^\top.
\end{aligned}
\end{equation}

The approximation formulas in \eqref{eqn: b & C approx grad} are well-defined, regardless of the singularity of matrix \(J\). However, it only works as a heuristics method: we do not have theoretical guarantees on the nonsingularity of matrix \(J\). As a result, the gradient may not exist and the formulas in \eqref{eqn: b & C approx grad} only
provides an empirical proxy for the gradients. In practice, however, we find that such an proxy work well. See Section~\ref{sec: numerical} for  further details.

\subsection{Approximate projected gradient method}

We present the approximate projected gradient method for optimization~\eqref{opt: bilevel entropy} in Algorithm~\ref{alg: proj grad}, where the projection map \(\Pi_{\mathbb{B}}:\mathbb{R}^m\to\mathbb{R}^m\) and \(\Pi_{\mathbb{D}}:\mathbb{R}^{m\times m}\to\mathbb{R}^{m\times m}\) is given by
\begin{equation}\label{eqn: proj D}
    \Pi_{\mathbb{B}}(b)  = \underset{z\in\mathbb{B}}{\argmin} \norm{z-b},\enskip 
    \Pi_{\mathbb{D}}(C)  = \underset{X\in\mathbb{D}}{\argmin} \norm{X-C}_F,
\end{equation}
for all \(C\in\mathbb{R}^{m\times m}\). 
At each iteration, this method first solve the nonlinear least-squares problem in \eqref{opt: nonlinear ls}---where \(\lambda\) is a tuning parameter---then update matrix \(C\) using the approximate gradient formulas in \eqref{eqn: b & C approx grad} and a positive step size \(\alpha\) until convergence.

\begin{algorithm}[!ht]
\caption{Approximate projected gradient method. }
\begin{algorithmic}[1]
\Require  \(\psi:\mathbb{R}^{pm}\to\mathbb{R}\),  \(\lambda,\alpha, \delta\in\mathbb{R}_{>0}\), \(k_{\max}\in\mathbb{N}\).
\State Initialize \(b=\mathbf{0}_{pm}\) and \(C=0_{pm\times pm}\).
\While{\(k<k_{\max}\)}
\State Solve optimization~\eqref{opt: nonlinear ls} for \(x\).
\State \(b\gets \Pi_{\mathbb{B}}(b-\alpha \hat{\nabla}_b\psi(x))\)
\State \(C\gets \Pi_{\mathbb{D}}(C-\alpha \hat{\nabla}_C\psi(x))\)
\State \(k\gets k+1\)
\EndWhile
\Ensure Vector \(b\) and matrix \(C\).
\end{algorithmic}
\label{alg: proj grad}
\end{algorithm}

\section{Numerical examples}
\label{sec: numerical}

Although initially designed for optimization~\eqref{opt: bilevel entropy}, Algorithm~\ref{alg: proj grad} also provides good solutions for optimization~\eqref{opt: bilevel Nash} in practice. We demonstrate this phenomenon via numerical examples on atomic routing games on grid worlds.
\subsection{Atomic routing games setup}
We consider atomic routing games in Section~\ref{subsec: atomic routing} defined on 2-dimensional grid worlds. In particular, the directed graph of the game is as follows.
Each node in the graph corresponds to a grid. The two nodes are connected by a link if and only if the corresponding two grids are adjacent. In particular, a \(3\times 3\) grid world corresponds to a graph with \(9\) nodes and \(24\) links; a \(5\times 5\) grid world corresponds to a graph with \(25\) nodes and \(80\) links; see Fig.~\ref{fig: grid} for an illustration. 

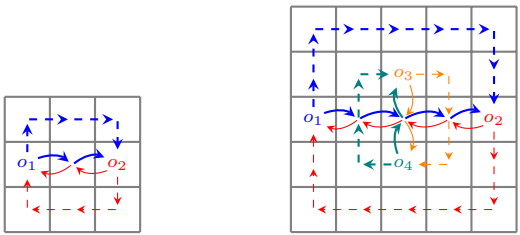
\begin{figure}[!ht]
\centering
\begin{subfigure}[b]{0.49\linewidth}
        \centering
        \begin{tikzpicture}[scale=0.6,shorten >=1pt,node distance=0.5cm,on grid,auto,state/.style={circle,inner sep=0.05pt}]
          \draw[help lines,thick] (0,0) grid (3,3);
          
          \node[state]  (q_1_1) at (0.5,2.5)    {};
          \node[state]  (q_1_2) at (1.5,2.5)    {};
          \node[state]  (q_1_3) at (2.5,2.5)    {};
          \node[state]  (q_2_1) at (0.5,1.5)    {\scriptsize \color{blue}$o_1$};
          \node[state]  (q_2_2) at (1.5,1.5)    {};
          \node[state]  (q_2_3) at (2.5,1.5)    {\scriptsize \color{red}$o_2$};
          \node[state]  (q_3_1) at (0.5,0.5)    {};
          \node[state]  (q_3_2) at (1.5,0.5)    {};
          \node[state]  (q_3_3) at (2.5,0.5)    {};
          
          \path[blue, -stealth, thick] 
                    (q_2_1) edge[bend left]    node    {} (q_2_2)
                    (q_2_2) edge[bend left]    node    {} (q_2_3);
                    
         \path[red, -stealth] 
                    (q_2_3) edge[bend left]    node    {} (q_2_2)
                    (q_2_2) edge[bend left]    node    {} (q_2_1);
                    
         \path[dashed, blue, -stealth, thick] 
                    (q_2_1) edge    node    {} (q_1_1)
                    (q_1_1) edge    node    {} (q_1_2)
                    (q_1_2) edge    node    {} (q_1_3)
                    (q_1_3) edge    node    {} (q_2_3);
                    
         \path[dashed, red, -stealth]        
                    (q_2_3) edge    node    {} (q_3_3)
                    (q_3_3) edge    node    {} (q_3_2)
                    (q_3_2) edge    node    {} (q_3_1)
                    (q_3_1) edge    node    {} (q_2_1);            
        \end{tikzpicture}
        \caption{The two-player game.}
        \label{fig: 3x3 grid}
    \end{subfigure}
    \hfill
    \begin{subfigure}[b]{0.49\linewidth}
        \centering
        \begin{tikzpicture}[scale= 0.6, shorten >=1pt,node distance=0.5cm,on grid,auto,state/.style={circle,inner sep=0.05pt}]
          \draw[help lines,thick] (0,0) grid (5,5);
        
          \node[state]  (q_1_1) at (0.5,4.5)    {};
          \node[state]  (q_1_2) at (1.5,4.5)    {};
          \node[state]  (q_1_3) at (2.5,4.5)    {};
          \node[state]  (q_1_4) at (3.5,4.5)    {};
          \node[state]  (q_1_5) at (4.5,4.5)    {};
          
          \node[state]  (q_2_1) at (0.5,3.5)    {};
          \node[state]  (q_2_2) at (1.5,3.5)    {};
          \node[state]  (q_2_3) at (2.5,3.5)    {\scriptsize\color{orange}$o_3$};
          \node[state]  (q_2_4) at (3.5,3.5)    {};
          \node[state]  (q_2_5) at (4.5,3.5)    {};
          
          \node[state]  (q_3_1) at (0.5,2.5)  {\scriptsize\color{blue}$o_1$};
          \node[state]  (q_3_2) at (1.5,2.5)    {};
          \node[state]  (q_3_3) at (2.5,2.5)    {};
          \node[state]  (q_3_4) at (3.5,2.5)    {};
          \node[state]  (q_3_5) at (4.5,2.5) {\scriptsize\color{red}$o_2$};

          \node[state]  (q_4_1) at (0.5,1.5)    {};
          \node[state]  (q_4_2) at (1.5,1.5)    {};
          \node[state]  (q_4_3) at (2.5,1.5) {\scriptsize\color{teal}$o_4$};
          \node[state]  (q_4_4) at (3.5,1.5)    {};
          \node[state]  (q_4_5) at (4.5,1.5)    {};
          
          \node[state]  (q_5_1) at (0.5,0.5)    {};
          \node[state]  (q_5_2) at (1.5,0.5)    {};
          \node[state]  (q_5_3) at (2.5,0.5)    {};
          \node[state]  (q_5_4) at (3.5,0.5)    {};
          \node[state]  (q_5_5) at (4.5,0.5)    {};
          
          \path[dashed, blue,  -stealth, thick] (q_3_1) edge    node    {} (q_2_1)
                    (q_2_1) edge    node    {} (q_1_1)
                    (q_1_1) edge    node    {} (q_1_2)
                    (q_1_2) edge    node    {} (q_1_3)
                    (q_1_3) edge    node    {} (q_1_4)
                    (q_1_4) edge    node    {} (q_1_5)
                    (q_1_5) edge    node    {} (q_2_5)
                    (q_2_5) edge    node    {} (q_3_5);
                    
            \path[dashed, red,  -stealth] (q_3_5) edge    node    {} (q_4_5)
                    (q_4_5) edge    node    {} (q_5_5)
                    (q_5_5) edge    node    {} (q_5_4)
                    (q_5_4) edge    node    {} (q_5_3)
                    (q_5_3) edge    node    {} (q_5_2)
                    (q_5_2) edge    node    {} (q_5_1)
                    (q_5_1) edge    node    {} (q_4_1)
                    (q_4_1) edge    node    {} (q_3_1);
                    
           \path[dashed, orange,  -stealth] 
                  (q_2_3) edge    node    {} (q_2_4)
                    (q_2_4) edge    node    {} (q_3_4)
                    (q_3_4) edge    node    {} (q_4_4)
                    (q_4_4) edge    node    {} (q_4_3);

           \path[dashed, teal,  -stealth, thick]
                   (q_4_3) edge    node    {} (q_4_2)
                    (q_4_2) edge    node    {} (q_3_2)
                    (q_3_2) edge    node    {} (q_2_2)
                    (q_2_2) edge    node    {} (q_2_3);
                    
          \path[blue,  -stealth, thick] 
                    (q_3_1) edge[bend left]    node    {} (q_3_2)
                    (q_3_2) edge[bend left]    node    {} (q_3_3)
                    (q_3_3) edge[bend left]    node    {} (q_3_4)
                    (q_3_4) edge[bend left]    node    {} (q_3_5);
        
        \path[red,  -stealth]             
                    (q_3_5) edge[bend left]    node    {} (q_3_4)
                    (q_3_4) edge[bend left]    node    {} (q_3_3)
                    (q_3_3) edge[bend left]    node    {} (q_3_2)
                    (q_3_2) edge[bend left]    node    {} (q_3_1);
                    
        \path[orange,  -stealth]             
                    (q_2_3) edge[bend left]    node    {} (q_3_3)
                    (q_3_3) edge[bend left]    node    {} (q_4_3);
                    
        \path[teal,  -stealth, thick]             
                    (q_4_3) edge[bend left]    node    {} (q_3_3)
                    (q_3_3) edge[bend left]    node    {} (q_2_3);
                    
        \end{tikzpicture}
        \caption{The four-player game.}
        \label{fig: 5x5 grid}
    \end{subfigure}
        
    \caption{An illustration of the origin nodes of players and the optimal paths of the players at the Nash equilibrium before (solid) and after cost design (dashed). }
    \label{fig: grid}
\end{figure}

Next, we focus on two cases of atomic routing games on grid worlds: a two-player atomic routing game in a \(3\times 3\) grid world, and a four-player atomic routing game in a \(5\times 5\) grid world. We illustrate the origin grid of each player  in these games along with the paths they choose at the Nash equilibrium in Fig.~\ref{fig: grid}. 

\subsection{Cost design setup}
We make the following choices of parameters in optimization~\eqref{opt: bilevel Nash}. We let 
\begin{equation}\label{eqn: set B & D}
\begin{aligned}
     \mathbb{B}\coloneqq &  \{b\in\mathbb{R}^{pm}| 0_{pm}\leq b\leq \delta\mathbf{1}_{pm}\},\\
     \mathbb{D}\coloneqq & \{C\in\mathbb{R}^{pm\times pm}|  C+C^\top\succeq 0, \enskip \norm{C}_F\leq \rho,\\
     & C_{ii}=C_{ii}^\top,\enskip \forall i\in[p]\},
\end{aligned}
\end{equation}
where matrix \(C\) and its \(i\)-th diagonal block \(C_{ii}\)are associated by \eqref{eqn: game parameter}, \(\delta\in\mathbb{R}_{\ge 0}\) is the initialization parameter in Algorithm~\ref{alg: proj grad}, and \(\rho\in\mathbb{R}_{\ge 0}\) is tuning parameters.
In this case, the projections in \eqref{eqn: proj D} have closed-form formulas; see \cite[Ex. 3.3.15]{bauschke1996projection} and \cite[Lem. 2]{yu2022inverse} for details. For the objective function in optimization~\eqref{opt: bilevel Nash}, we let \(\psi(x)=\frac{1}{2}\norm{x-\hat{x}}^2\),
where \(\hat{x}\in\mathbb{R}^{pm}\) is a desired Nash equilibrium. We illustrate the paths that correspond to \(\hat{x}\) using dashed arrows in Fig.~\ref{fig: grid}.

\subsection{Numerical results}
We demonstrate the application of Algorithm~\ref{alg: proj grad} to the two atomic routing games we constructed. Throughout we let \(\delta=0.1\) and \(\epsilon=0.01\) in Algorithm~\ref{alg: proj grad}; we let \(\overline{x}(b, C)\) denote the solution of optimization~\eqref{opt: Nash nlp} computed by IPOPT \cite{wachter2006implementation} with default accuracy tolerances. Fig.~\ref{fig: lambda} shows the convergence of \(\overline{x}(b, C)\) to the desired equilibrium \(\hat{x}\), where we fix \(\rho= 0.5\) in \eqref{eqn: b & C grad def} and  \(\alpha=0.005\) in Algorithm~\ref{alg: proj grad}. We also illustrate the the cost of the two paths illustrated in Fig.~\ref{fig: grid} before and after the iterations in Algorithm~\ref{alg: proj grad} with \(\lambda=0.005\). These results show that Algorithm~\ref{alg: proj grad} successfully changes the paths of the players at the Nash equilibrium to the desired ones.  Fig.~\ref{fig: bar} shows the path of dashed arrows (gray bars) has a higher cost before design, and a lower cost after design than the paths of solid arrows (white bars), making dashed paths more favorable as desired. 

Although our experiments do not explicitly showcase the differences between the solution of optimization~\eqref{opt: nonlinear ls} and that of optimization~\eqref{opt: Nash nlp}, they confirmed that optimization~\eqref{opt: nonlinear ls} is a valid proxy for optimization~\eqref{opt: Nash nlp} for the purpose of cost design. In particular, Algorithm~\ref{alg: proj grad} designs the value of vector \(b\) and matrix \(C\) by differentiating optimization~\eqref{opt: nonlinear ls}. The resulting vector \(b\) and matrix \(C\), on the other hand, cause the solution of optimization~\eqref{opt: Nash nlp}---denoted by \(\overline{x}(b, C)\), which satisfies the exact Nash equilibrium conditions in \eqref{eqn: kkt}---to match the desired value \(\hat{x}\), as shown by Fig.~\ref{fig: lambda} and Fig.~\ref{fig: rho}.

Optimization~\eqref{opt: entropy} has a numerical limitation. As \(\lambda\) decreases, the numerical values in optimization~\eqref{opt: entropy} increases rapidly. As a result, the value of \(\lambda\) is limited to larger than \(10^{-4}\) in practice to avoid integer overflow. In other words, although optimization~\eqref{opt: entropy} provides an arbitrarily accurate approximation for optimization~\eqref{opt: Nash nlp} as \(\lambda\) decreases, there is a numerical bottleneck of the quality of this approximation.

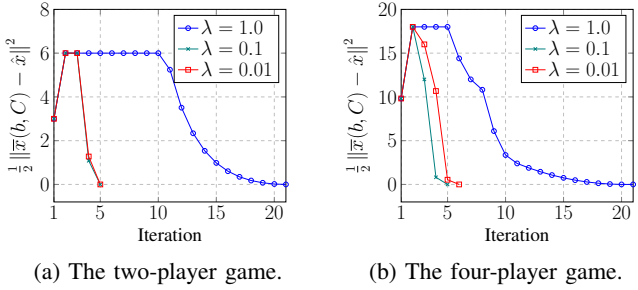
\begin{figure}[!ht]
\centering
    \begin{subfigure}[b]{0.48\linewidth}
    \begin{tikzpicture}[scale=0.45]
    \begin{axis}[
            legend style={font=\LARGE},
            legend cell align={left},
            xlabel={Iteration},
            ylabel={$\frac{1}{2}\norm{\overline{x}(b, C)-\hat{x}}^2$},
            label style={font=\LARGE},
            tick label style={font=\LARGE},
            xmin=1, xmax=21,
            ymax=8,
            xtick={1,5,10,15,20},
            legend pos=north east,
            xmajorgrids=true,
            ymajorgrids=true,
            grid style=dashed,
    ]
    \addplot [blue, mark=o] table [x=iter, y=lambda_1, col sep=comma] {3x3_2p_lambda_plot.csv};
    \addlegendentry{$\lambda=1.0$}
    
    \addplot [teal, mark=x] table [x=iter, y=lambda_0_1, col sep=comma] {3x3_2p_lambda_plot.csv};
    \addlegendentry{$\lambda=0.1$}
    
    \addplot [red, mark=square] table [x=iter, y=lambda_0_0_1, col sep=comma] {3x3_2p_lambda_plot.csv};
    \addlegendentry{$\lambda=0.01$}
    
    
    \end{axis}
    \end{tikzpicture}
    \caption{The two-player game.}
    \end{subfigure}
    \hfill
    \begin{subfigure}[b]{0.48\linewidth}
    \begin{tikzpicture}[scale=0.45]
    \begin{axis}[
            legend style={font=\LARGE},
            legend cell align={left},
            xlabel={Iteration},
            ylabel={$\frac{1}{2}\norm{\overline{x}(b, C)-\hat{x}}^2$},
            label style={font=\LARGE},
            tick label style={font=\LARGE},
            xmin=1, xmax=21,
            ymax=20,
            xtick={1,5,10,15,20},
            legend pos=north east,
            xmajorgrids=true,
            ymajorgrids=true,
            grid style=dashed,
    ]
    \addplot [blue, mark=o] table [x=iter, y=lambda_1, col sep=comma] {5x5_4p_lambda_plot.csv};
    \addlegendentry{$\lambda=1.0$}
    
    \addplot [teal, mark=x] table [x=iter, y=lambda_0_1, col sep=comma] {5x5_4p_lambda_plot.csv};
    \addlegendentry{$\lambda=0.1$}
    
    \addplot [red, mark=square] table [x=iter, y=lambda_0_0_1, col sep=comma] {5x5_4p_lambda_plot.csv};
    \addlegendentry{$\lambda=0.01$}
    
    
    \end{axis}
    \end{tikzpicture}
    \caption{The four-player game.}
    \end{subfigure}
    \caption{The convergence of the iterates in Algorithm~\ref{alg: proj grad}.}\label{fig: lambda}
\end{figure}
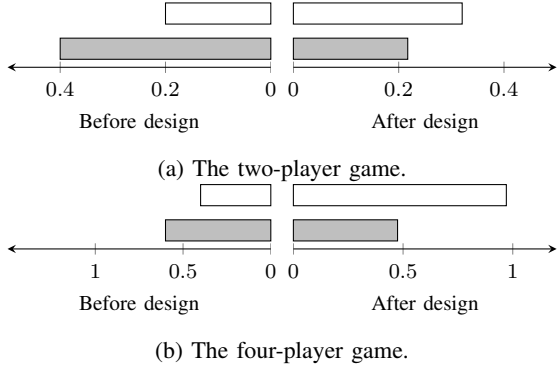
\begin{figure}
    \centering
\begin{subfigure}[b]{\linewidth}
\centering
\begin{tikzpicture}
\begin{axis}[
name=like,
scale only axis,
xbar, xmin=0, xmax=0.5,
xlabel={\footnotesize After design},
width=3.5cm, height= 16pt,
xtick={0,0.2,0.4},
tick label style={font=\footnotesize},
axis x line=left,
axis y line=none,
clip=false,
every axis plot/.append style={
  xbar,
  bar width=0.6,
  bar shift=0pt,
  fill
}
]
\addplot[black,fill=gray!50, yshift=0.2cm] coordinates {
    (0.2172,1)};
\addplot[black,fill=white, yshift=0.2cm] coordinates {
    (0.3206,2)};
\end{axis}
\begin{axis}[
at={(like.north west)},anchor=north east, xshift=-0.3cm,
scale only axis,
xbar, xmin=0, xmax=0.5,
xlabel={\footnotesize Before design},
width=3.5cm, height= 16pt,
x dir=reverse,
xtick={0,0.2,0.4},
tick label style={font=\footnotesize},
axis x line=left,
axis y line=none,
clip=false,
every axis plot/.append style={
  xbar,
  bar width=0.6,
  bar shift=0pt,
  fill
}
]
\addplot[black,fill=gray!50, yshift=0.2cm] coordinates {
    (0.4,1)};
\addplot[black,fill=white, yshift=0.2cm] coordinates {
    (0.2,2)};
\end{axis}
\end{tikzpicture}
\caption{The two-player game.}
\end{subfigure}
\vfill    
\begin{subfigure}[b]{\linewidth}
\centering
\begin{tikzpicture}
\begin{axis}[
name=like,
scale only axis,
xbar, xmin=0, xmax=1.2,
xlabel={\footnotesize After design},
width=3.5cm, height= 16pt,
xtick={0,0.5,1.0},
tick label style={font=\footnotesize},
axis x line=left,
axis y line=none,
clip=false,
every axis plot/.append style={
  xbar,
  bar width=0.6,
  bar shift=0pt,
  fill
}
]
\addplot[black,fill=gray!50, yshift=0.2cm] coordinates {
    (0.4749,1) };
\addplot[black,fill=white, yshift=0.2cm] coordinates {
    (0.9707,2)};
\end{axis}
\begin{axis}[
at={(like.north west)},anchor=north east, xshift=-0.3cm,
scale only axis,
xbar, xmin=0, xmax=1.5,
xlabel={\footnotesize Before design},
width=3.5cm, height= 16pt,
x dir=reverse,
xtick={0,0.5,1.0},
tick label style={font=\footnotesize},
axis x line=left,
axis y line=none,
clip=false,
every axis plot/.append style={
  xbar,
  bar width=0.6,
  bar shift=0pt,
  fill
}
]
\addplot[black,fill=gray!50, yshift=0.2cm] coordinates {
    (0.6,2)};
\addplot[black,fill=white, yshift=0.2cm] coordinates {
    (0.4,3)}; 
\end{axis}
\end{tikzpicture}
\caption{The four-player game.}
   \end{subfigure}
    \caption{The cost of the blue paths in Fig.~\ref{fig: grid} for player 1. Gray and white bars correspond to those marked by dashed and solid arrows, respectively.  }\label{fig: bar}
\end{figure}

We also demonstrate the effects of the parameter \(\rho\) on the Nash equilibrium \(\overline{x}(b, C)\) in Fig.~\ref{fig: rho}, where vector \(b\) and matrix \(C\) are computed by Algorithm~\ref{alg: proj grad} with \(\alpha=\lambda=0.01\). These results confirm an intuition: the larger \(\rho\) is, the more change in matrix \(C\) is allowed in Algorithm~\ref{alg: proj grad}, the closer to the desired value is the Nash equilibrium. 

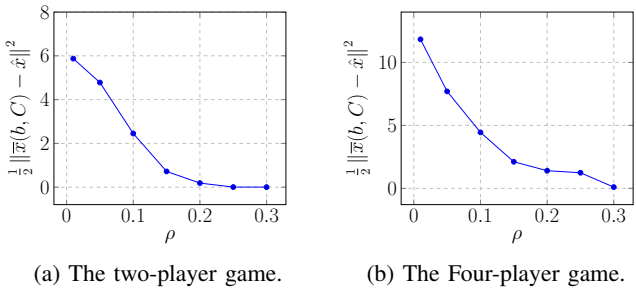
\begin{figure}[!ht]
\centering
    \begin{subfigure}[b]{0.48\linewidth}
     \begin{tikzpicture}[scale=0.45]
        \begin{axis}[
                legend style={font=\LARGE},
                xlabel={$\rho$},
                ylabel={$\frac{1}{2}\norm{\overline{x}(b, C)-\hat{x}}^2$},
                label style={font=\LARGE},
                tick label style={font=\LARGE},
                ymax=8,
                xtick={0,0.1,0.2,0.3},
                legend pos=south east,
                xmajorgrids=true,
                ymajorgrids=true,
                grid style=dashed,
        ]
        \addplot table [x=rho, y=psi_convergence_val, col sep=comma] {3x3_2p_rho_plot.csv};        
        \end{axis}
        \end{tikzpicture}
    \caption{The two-player game.}
    \end{subfigure}
    \hfill
    \begin{subfigure}[b]{0.48\linewidth}
    \begin{tikzpicture}[scale=0.45]
        \begin{axis}[
                legend style={font=\LARGE},
                xlabel={$\rho$},
                ylabel={$\frac{1}{2}\norm{\overline{x}(b, C)-\hat{x}}^2$},
                label style={font=\LARGE},
                tick label style={font=\LARGE},
                ymax=14,
                xtick={0,0.1,0.2,0.3},
                legend pos=south east,
                xmajorgrids=true,
                ymajorgrids=true,
                grid style=dashed,
        ]
        \addplot table [x=rho, y=psi_convergence_val, col sep=comma] {5x5_4p_rho_plot.csv};
        \end{axis}
        \end{tikzpicture}
    \caption{The Four-player game.}
    \end{subfigure}
    \caption{The effects of parameter \(\rho\) on the Nash equilibrium corresponding to the output of Algorithm~\ref{alg: proj grad}. }\label{fig: rho}
\end{figure}

\section{Conclusion}
\label{sec: conclusion}

We developed an approximate projected gradient method to design the link cost function in atomic routing games such that the Nash equilibrium minimizes a given smooth function. However, the current work is limited to linear link cost and routing in determinstic network.
For future work, we consider extensions to routing games with polynomial link cost, stochastic dynamic network routing, and stochastic user equilibrium under noisy cost perception.




\bibliographystyle{IEEEtran}
\bibliography{IEEEabrv,reference}


\end{document}